\documentclass[aps,pra,twocolumn]{revtex4-1}
\usepackage{amsmath,amssymb,amsthm,graphicx,bbm}
\usepackage[
	bookmarks = false,
	colorlinks = true,
	linkcolor = blue,
	urlcolor= black,
	citecolor = blue]{hyperref}
\newtheorem{theorem}{Theorem}
\newtheorem{lemma}{Lemma}
\newtheorem{corollary}{Corollary}
\newtheorem{definition}{Definition}

\begin{document}
\title{Bound on local minimum-error discrimination of bipartite quantum states}
\author{Donghoon Ha}
\affiliation{Department of Applied Mathematics and Institute of Natural Sciences, Kyung Hee University, Yongin 17104, Republic of Korea}
\author{Jeong San Kim}
\email{freddie1@khu.ac.kr}
\affiliation{Department of Applied Mathematics and Institute of Natural Sciences, Kyung Hee University, Yongin 17104, Republic of Korea}

\begin{abstract}
We consider the optimal discrimination of bipartite quantum states and provide an upper bound for the maximum success probability of optimal local discrimination. We also provide a necessary and sufficient condition for a measurement to realize the upper bound. We further establish a necessary and sufficient condition for this upper bound to be saturated.  Finally, we illustrate our results using an example.
\end{abstract}
\maketitle
\indent Quantum state discrimination is one of the fundamental tasks in quantum information processing \cite{chef2000,barn20091,berg2010}.
In discriminating orthogonal quantum states, there is always a measurement of perfect discrimination. On the other hand, 
non-orthogonal quantum states cannot be perfectly discriminated by means of any measurement. 
For this reason, there has been a huge amount of research effort focused on finding good state-discriminating strategies \cite{bae2015}.\\
\indent In discriminating multiparty quantum states,
it is known that some optimal state discrimination cannot be realized only by \emph{local operations and classical communication} (LOCC) \cite{pere1991,benn19991,duan2007,chit2013}.
To characterize the limitation of LOCC discrimination,
many studies have been contributed to optimal local discrimination of multiparty quantum states \cite{ghos2001,walg2002,fan2004,duan2009,
chit20141,band2015,band2021}.
Nevertheless, due to the difficulty of mathematical characterization for LOCC, 
it is still a hard task to realize optimal local discrimination.\\
\indent One efficient way to handle this difficulty is to investigate possible upper bounds for the maximum success probability of optimal local discrimination.
Moreover, establishing good conditions on measurements realizing such upper bounds is also important for a better understanding of optimal local discrimination.\\
\indent Here, we consider bipartite quantum state discrimination and provide an upper bound for the maximum success probability of optimal local discrimination. We also provide a necessary and sufficient condition for a measurement to realize the upper bound. Moreover, we establish a necessary and sufficient condition for this upper bound to be saturated; it is equal to the maximum success probability of optimal local discrimination. Finally, we illustrate our results using an example.\\
\indent In bipartite quantum systems, a state is represented by a density operator $\rho$, that is, a Hermitian operator having positive semidefiniteness $\rho\succeq0$ and unit trace $\mathrm{Tr}\rho=1$, acting on a bipartite complex Hilbert space $\mathcal{H}=\mathbb{C}^{d_{1}}\otimes\mathbb{C}^{d_{2}}$.
A measurement is described by a \emph{positive operator-valued measure} (POVM) $\{M_{i}\}_{i}$ that is a set of Hermitian operators $M_{i}$ on $\mathcal{H}$ satisfying 
positive semidefiniteness $M_{i}\succeq0$ for all $i$ and the
completeness relation $\sum_{i}M_{i}=\mathbbm{1}$, where $\mathbbm{1}$ is the identity operator on $\mathcal{H}$. 
When a measurement $\{M_{i}\}_{i}$ is performed for input state $\rho$, the probability of obtaining measurement outcome with respect to $M_{j}$ is $\mathrm{Tr}(\rho M_{j})$.

\begin{definition}
A Hermitian operator $E$ on $\mathcal{H}$ is called positive partial transpose(PPT) if its partial transposition, denoted $E^{\rm PT}$,
is positive semidefinite\cite{pere1996,horo1996,pptp}. Similarly, we say that a set of Hermitian operators $\{E_{i}\}_{i}$ is PPT if $E_{i}$ is PPT for all $i$. 
\end{definition}

\indent A \emph{LOCC measurement} is a measurement that can be realized by LOCC. We note that every LOCC measurement is a PPT measurement \cite{chit20142}.\\
\indent Throughout this paper, we only consider
the situation of discriminating $n$ \emph{bipartite} quantum states $\rho_{1},\ldots,\rho_{n}$ in which the state $\rho_{i}$ is prepared with the probability $\eta_{i}$.
We denote this situation as an ensemble $\mathcal{E}=\{\eta_{i},\rho_{i}\}_{i=1}^{n}$.\\
\indent Let us consider the quantum state discrimination of $\mathcal{E}$ using a measurement $\{M_{i}\}_{i=1}^{n}$ where the click of $M_{i}$ means the detection of $\rho_{i}$.
The \emph{minimum-error discrimination} \cite{hels1976,hole1979,yuen1975} of $\mathcal{E}$ is to achieve the minimum error in correctly guessing the prepared state.
Equivalently, the minimum-error discrimination of $\mathcal{E}$ is to achieve the so-called
\emph{guessing probability} of $\mathcal{E}=\{\eta_{i},\rho_{i}\}_{i=1}^{n}$, defined as
\begin{equation}\label{eq:pgdef}
p_{\rm G}(\mathcal{E})=\max_{\rm POVM}\sum_{i=1}^{n}\eta_{i}\mathrm{Tr}(\rho_{i}M_{i}),
\end{equation}
where the maximum is taken over all possible POVMs.
The POVMs providing the optimal success probability $p_{\rm G}(\mathcal{E})$ in Eq.~\eqref{eq:pgdef} can be verified from the following conditions\cite{hole1979,yuen1975,barn20092,bae2013}:
\begin{subequations}
\begin{eqnarray}
\sum_{j=1}^{n}\eta_{j}\rho_{j}M_{j}-\eta_{i}\rho_{i}\succeq0~\forall i=1,\ldots,n,\label{eq:nscfme}\\[2mm]
M_{i}(\eta_{i}\rho_{i}-\eta_{j}\rho_{j})M_{j}=0~\forall i,j=1,\ldots,n.\label{eq:oncfme}
\end{eqnarray}
\end{subequations}
Note that Condition \eqref{eq:nscfme} is a necessary and sufficient condition for a measurement $\{M_{i}\}_{i=1}^{n}$ to realize $p_{\rm G}(\mathcal{E})$, whereas 
Condition \eqref{eq:oncfme} is a necessary but not sufficient condition for a POVM $\{M_{i}\}_{i=1}^{n}$ to provide $p_{\rm G}(\mathcal{E})$.\\
\indent When the available measurements are limited to PPT POVMs, we denote the maximum success probability by
\begin{equation}\label{eq:pptdef}
p_{\rm PPT}(\mathcal{E})=\max_{\substack{\rm PPT\\ \rm POVM}}\sum_{i=1}^{n}\eta_{i}\mathrm{Tr}(\rho_{i}M_{i}).
\end{equation}
We denote by $p_{\rm L}(\mathcal{E})$ the maximum of success probability
that can be obtained by using LOCC measurements; that is,
\begin{equation}\label{eq:pldef}
p_{\rm L}(\mathcal{E})=\max_{\rm LOCC}\sum_{i=1}^{n}\eta_{i}\mathrm{Tr}(\rho_{i}M_{i}).
\end{equation}
From the definitions of $p_{\rm G}(\mathcal{E})$ and $p_{\rm PPT}(\mathcal{E})$,
$p_{\rm G}(\mathcal{E})$ is obviously an upper bound of $p_{\rm PPT}(\mathcal{E})$.
Moreover, $p_{\rm L}(\mathcal{E})$ is a lower bound of $p_{\rm PPT}(\mathcal{E})$
because all LOCC measurements are PPT \cite{chit20142}. Thus, we have
\begin{equation}
p_{\rm L}(\mathcal{E})\leqslant p_{\rm PPT}(\mathcal{E})\leqslant p_{\rm G}(\mathcal{E}).
\end{equation}
We also note that $p_{\rm L}(\mathcal{E})=p_{\rm PPT}(\mathcal{E})$ if and only if 
there exists a LOCC measurement realizing $p_{\rm PPT}(\mathcal{E})$ since both $p_{\rm PPT}(\mathcal{E})$ and $p_{\rm L}(\mathcal{E})$ have the same objective function for maximization.\\
\indent For a given ensemble $\mathcal{E}=\{\eta_{i},\rho_{i}\}_{i=1}^{n}$, let us consider the maximum quantity
\begin{equation}\label{eq:qgdef}
q_{\rm G}(\mathcal{E})=\max_{\rm POVM}\sum_{i=1}^{n}\eta_{i}\mathrm{Tr}(\rho_{i}^{\rm PT}M_{i})
\end{equation}
over all possible POVMs.
The following lemma shows that $q_{\rm G}(\mathcal{E})$ in Eq.~\eqref{eq:qgdef} is an upper bound of $p_{\rm PPT}(\mathcal{E})$:

\begin{lemma}\label{lem:pptqg}
For a bipartite quantum state ensemble $\mathcal{E}=\{\eta_{i},\rho_{i}\}_{i=1}^{n}$, 
\begin{equation}\label{eq:ubppt}
p_{\rm PPT}(\mathcal{E})\leqslant q_{\rm G}(\mathcal{E}),
\end{equation}
where the equality holds if and only if there exists a PPT measurement realizing $q_{\rm G}(\mathcal{E})$.
\end{lemma}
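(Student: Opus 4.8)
The plan is to exploit a single structural fact: the partial transposition map $E\mapsto E^{\rm PT}$ is a linear involution on Hermitian operators that fixes the identity, $\mathbbm{1}^{\rm PT}=\mathbbm{1}$, and preserves the Hilbert--Schmidt inner product. Concretely, first I would record the elementary identity $\mathrm{Tr}(A^{\rm PT}B)=\mathrm{Tr}(AB^{\rm PT})$ valid for all Hermitian $A,B$ on $\mathcal{H}$, which combined with $(B^{\rm PT})^{\rm PT}=B$ immediately gives $\mathrm{Tr}(A^{\rm PT}B^{\rm PT})=\mathrm{Tr}(AB)$. This is the only nontrivial ingredient, and it is a one-line index computation.

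For the inequality \eqref{eq:ubppt}, I would pick a PPT POVM $\{M_{i}\}_{i=1}^{n}$ attaining $p_{\rm PPT}(\mathcal{E})$ (such a maximizer exists because the set of PPT POVMs is compact and the objective is continuous) and set $N_{i}:=M_{i}^{\rm PT}$. PPT-ness of $\{M_{i}\}$ forces $N_{i}\succeq0$, while $\sum_{i}N_{i}=(\sum_{i}M_{i})^{\rm PT}=\mathbbm{1}^{\rm PT}=\mathbbm{1}$, so $\{N_{i}\}_{i=1}^{n}$ is a genuine POVM. Then the inner-product identity gives $\sum_{i}\eta_{i}\mathrm{Tr}(\rho_{i}M_{i})=\sum_{i}\eta_{i}\mathrm{Tr}(\rho_{i}^{\rm PT}N_{i})\leqslant q_{\rm G}(\mathcal{E})$ by the definition \eqref{eq:qgdef} of $q_{\rm G}$, which is exactly \eqref{eq:ubppt}.

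For the equality characterization, the key observation is that $\{M_{i}\}\mapsto\{M_{i}^{\rm PT}\}$ is an involution on the set of \emph{PPT} POVMs: if $\{M_{i}\}$ is a PPT POVM then $N_{i}=M_{i}^{\rm PT}\succeq0$ and also $N_{i}^{\rm PT}=M_{i}\succeq0$, so $\{N_{i}\}$ is again a PPT POVM, and applying the map a second time recovers $\{M_{i}\}$. For the ``if'' direction, given a PPT measurement $\{M_{i}\}$ realizing $q_{\rm G}(\mathcal{E})$, the associated PPT POVM $\{N_{i}\}=\{M_{i}^{\rm PT}\}$ satisfies $\sum_{i}\eta_{i}\mathrm{Tr}(\rho_{i}N_{i})=\sum_{i}\eta_{i}\mathrm{Tr}(\rho_{i}^{\rm PT}M_{i})=q_{\rm G}(\mathcal{E})$, hence $p_{\rm PPT}(\mathcal{E})\geqslant q_{\rm G}(\mathcal{E})$ and equality follows from \eqref{eq:ubppt}. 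For the ``only if'' direction, if $p_{\rm PPT}(\mathcal{E})=q_{\rm G}(\mathcal{E})$, I would take a PPT POVM $\{M_{i}\}$ attaining $p_{\rm PPT}(\mathcal{E})$ and verify, via the same identity, that $\{M_{i}^{\rm PT}\}$ is a PPT measurement realizing $q_{\rm G}(\mathcal{E})$.

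I do not expect a serious obstacle; the proof is essentially a change of variables through partial transposition. The two points that genuinely need care are (i) confirming that partial transposition maps PPT POVMs \emph{onto} PPT POVMs, which uses both the positivity hypothesis on $\{M_{i}\}$ and the involution property, and (ii) invoking compactness of the POVM and PPT-POVM sets so that the relevant maxima are attained and the equality argument actually has an optimizer to transform.
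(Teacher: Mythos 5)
Your proof is correct and follows essentially the same route as the paper: both rest on the identity $\mathrm{Tr}(AB)=\mathrm{Tr}(A^{\rm PT}B^{\rm PT})$, the fact that $\mathbbm{1}^{\rm PT}=\mathbbm{1}$, and the observation that partial transposition is an involution on PPT POVMs, with the only cosmetic difference being that the paper proves the inequality by viewing $q_{\rm G}$ as a maximization over a constraint set containing all PPT POVMs while you transform an explicit optimizer of $p_{\rm PPT}$.
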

\begin{proof}
Inequality~\eqref{eq:ubppt} holds because
\begin{eqnarray}
q_{\rm G}(\mathcal{E})&=&\max_{\substack{M_{i}\succeq0\,\forall i\\ \sum_{i=1}^{n}M_{i}=\mathbbm{1}}}\sum_{i=1}^{n}\eta_{i}\mathrm{Tr}(\rho_{i}M_{i}^{\rm PT})\nonumber\\
&=&\max_{\substack{M_{i}^{\rm PT}\succeq0\,\forall i\\ \sum_{i=1}^{n}M_{i}=\mathbbm{1}}}\sum_{i=1}^{n}\eta_{i}\mathrm{Tr}(\rho_{i}M_{i})
\geqslant p_{\rm PPT}(\mathcal{E}),\label{eq:qgubr}
\end{eqnarray}
where the first equality follows from $\mathrm{Tr}(AB)=\mathrm{Tr}(A^{\rm PT}B^{\rm PT})$ for any two operators $A$ and $B$, the second equality holds due to $\mathbbm{1}=\mathbbm{1}^{\rm PT}$, and the inequality is from the fact that a PPT POVM 
$\{M_{i}\}_{i=1}^{n}$ implies $M_{i}\succeq0$ for all $i$
along with $M_{i}^{\rm PT}\succeq0$ for all $i$ 
and $\sum_{i=1}^{n}M_{i}=\mathbbm{1}$. \\
\indent If $p_{\rm PPT}(\mathcal{E})=q_{\rm G}(\mathcal{E})$, then
\begin{equation}
\begin{array}{rcl}
q_{\rm G}(\mathcal{E})=p_{\rm PPT}(\mathcal{E})
&=&\sum_{i=1}^{n}\eta_{i}\mathrm{Tr}(\rho_{i}M_{i}) \\[2mm]
&=&\sum_{i=1}^{n}\eta_{i}\mathrm{Tr}(\rho_{i}^{\rm PT}M_{i}^{\rm PT})
\end{array}
\end{equation}
for some PPT POVM $\{M_{i}\}_{i=1}^{n}$. Since $\{M_{i}^{\rm PT}\}_{i=1}^{n}$ is also a PPT POVM, there exists a PPT measurement giving $q_{\rm G}(\mathcal{E})$. Conversely, if $\{M_{i}\}_{i=1}^{n}$ is a PPT POVM providing $q_{\rm G}(\mathcal{E})$, then
\begin{equation}\label{eq:pgbr}
\begin{array}{rcl}
p_{\rm PPT}(\mathcal{E})&\leqslant& q_{\rm G}(\mathcal{E})
=\sum_{i=1}^{n}\eta_{i}\mathrm{Tr}(\rho_{i}^{\rm PT}M_{i})\\[2mm]
&=&\sum_{i=1}^{n}\eta_{i}\mathrm{Tr}(\rho_{i}M_{i}^{\rm PT})
\leqslant p_{\rm PPT}(\mathcal{E})
\end{array}
\end{equation}
where the first inequality is from Inequality \eqref{eq:qgubr} and the second inequality follows from the fact that $\{M_{i}^{\rm PT}\}_{i=1}^{n}$ is also a PPT POVM. Thus, $p_{\rm PPT}(\mathcal{E})=q_{\rm G}(\mathcal{E})$.
\end{proof}

\begin{corollary}\label{cor:plqg}
For a bipartite quantum state ensemble $\mathcal{E}=\{\eta_{i},\rho_{i}\}_{i=1}^{n}$, 
\begin{equation}\label{eq:pleqqg}
p_{\rm L}(\mathcal{E})=q_{\rm G}(\mathcal{E})
\end{equation}
if and only if there is a POVM $\{M_{i}\}_{i=1}^{n}$ giving $q_{\rm G}(\mathcal{E})$ such that
$\{M_{i}^{\rm PT}\}_{i=1}^{n}$ is a LOCC measurement.
\end{corollary}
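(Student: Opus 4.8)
The plan is to read this off directly from Lemma~\ref{lem:pptqg} together with the fact, already recorded above, that $p_{\rm L}(\mathcal{E})=p_{\rm PPT}(\mathcal{E})$ holds precisely when some LOCC measurement attains $p_{\rm PPT}(\mathcal{E})$. The only tool needed to pass between the two pictures is the pair of elementary facts used in the proof of Lemma~\ref{lem:pptqg}: the identity $\mathrm{Tr}(AB)=\mathrm{Tr}(A^{\rm PT}B^{\rm PT})$ and the involution property $(E^{\rm PT})^{\rm PT}=E$. Together these say that $\{M_{i}\}_{i=1}^{n}\mapsto\{M_{i}^{\rm PT}\}_{i=1}^{n}$ is an involutive bijection on POVMs which, since $\sum_{i}M_{i}=\mathbbm{1}$ forces $\sum_{i}M_{i}^{\rm PT}=\mathbbm{1}$, carries POVMs computing $q_{\rm G}(\mathcal{E})$ onto PPT POVMs computing $p_{\rm PPT}(\mathcal{E})$, and vice versa.

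For the ``if'' direction I would start from a POVM $\{M_{i}\}_{i=1}^{n}$ realizing $q_{\rm G}(\mathcal{E})$ for which $\{M_{i}^{\rm PT}\}_{i=1}^{n}$ is a LOCC measurement. Rewriting $\sum_{i}\eta_{i}\mathrm{Tr}(\rho_{i}^{\rm PT}M_{i})=\sum_{i}\eta_{i}\mathrm{Tr}(\rho_{i}M_{i}^{\rm PT})$ shows that this LOCC measurement achieves success probability $q_{\rm G}(\mathcal{E})$ on $\mathcal{E}$, so $p_{\rm L}(\mathcal{E})\geqslant q_{\rm G}(\mathcal{E})$. Combined with $p_{\rm L}(\mathcal{E})\leqslant p_{\rm PPT}(\mathcal{E})\leqslant q_{\rm G}(\mathcal{E})$ from Lemma~\ref{lem:pptqg} and the inclusion of LOCC measurements in PPT measurements, this gives $p_{\rm L}(\mathcal{E})=q_{\rm G}(\mathcal{E})$.

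For the ``only if'' direction, assume $p_{\rm L}(\mathcal{E})=q_{\rm G}(\mathcal{E})$. The chain $p_{\rm L}(\mathcal{E})\leqslant p_{\rm PPT}(\mathcal{E})\leqslant q_{\rm G}(\mathcal{E})$ then forces $p_{\rm L}(\mathcal{E})=p_{\rm PPT}(\mathcal{E})$, so there is a LOCC measurement $\{N_{i}\}_{i=1}^{n}$ with $\sum_{i}\eta_{i}\mathrm{Tr}(\rho_{i}N_{i})=p_{\rm PPT}(\mathcal{E})=q_{\rm G}(\mathcal{E})$. Setting $M_{i}=N_{i}^{\rm PT}$, the PPT property of the LOCC measurement $\{N_{i}\}_{i=1}^{n}$ makes $\{M_{i}\}_{i=1}^{n}$ a (genuine) POVM, the computation $\sum_{i}\eta_{i}\mathrm{Tr}(\rho_{i}^{\rm PT}M_{i})=\sum_{i}\eta_{i}\mathrm{Tr}(\rho_{i}N_{i})=q_{\rm G}(\mathcal{E})$ shows it attains $q_{\rm G}(\mathcal{E})$, and $\{M_{i}^{\rm PT}\}_{i=1}^{n}=\{N_{i}\}_{i=1}^{n}$ is LOCC, as required. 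I do not expect a real obstacle here, since the argument is essentially a repackaging of Lemma~\ref{lem:pptqg}; the only point demanding care is the bookkeeping with partial transposition — keeping straight that in each direction it is $\{M_{i}^{\rm PT}\}_{i=1}^{n}$, not $\{M_{i}\}_{i=1}^{n}$, that must be the LOCC object, and invoking $(E^{\rm PT})^{\rm PT}=E$ so the correspondence is genuinely invertible.
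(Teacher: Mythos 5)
Your proposal is correct and follows essentially the same route as the paper: both directions reduce to the identity $\mathrm{Tr}(AB)=\mathrm{Tr}(A^{\rm PT}B^{\rm PT})$, the PPT property of LOCC measurements, and the chain $p_{\rm L}(\mathcal{E})\leqslant p_{\rm PPT}(\mathcal{E})\leqslant q_{\rm G}(\mathcal{E})$ from Lemma~\ref{lem:pptqg}, with the ``only if'' direction constructed exactly as in the paper by partially transposing an optimal LOCC measurement. The only cosmetic difference is that in the ``if'' direction you bound $p_{\rm L}(\mathcal{E})\geqslant q_{\rm G}(\mathcal{E})$ directly, whereas the paper routes through the equality condition $p_{\rm PPT}(\mathcal{E})=q_{\rm G}(\mathcal{E})$ of Lemma~\ref{lem:pptqg}; both are sound.
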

\begin{proof}
\indent Suppose that 
$\{M_{i}\}_{i=1}^{n}$ is a POVM giving $q_{\rm G}(\mathcal{E})$ and $\{M_{i}^{\rm PT}\}_{i=1}^{n}$ is a LOCC measurement. 
Since $\{M_{i}\}_{i=1}^{n}$ is a PPT POVM, $p_{\rm PPT}(\mathcal{E})=q_{\rm G}(\mathcal{E})$ due to Lemma~\ref{lem:pptqg}.
Also, $\{M_{i}^{\rm PT}\}_{i=1}^{n}$ gives $p_{\rm PPT}(\mathcal{E})$ in Eq.~\eqref{eq:pptdef} because
\begin{equation}
\begin{array}{rcl}
\sum_{i=1}^{n}\eta_{i}\mathrm{Tr}(\rho_{i}M_{i}^{\rm PT})
&=&\sum_{i=1}^{n}\eta_{i}\mathrm{Tr}(\rho_{i}^{\rm PT}M_{i})\\[2mm]
&=&q_{\rm G}(\mathcal{E})=p_{\rm PPT}(\mathcal{E}).
\end{array}
\end{equation}
The existence of a LOCC measurement giving $p_{\rm PPT}(\mathcal{E})$ implies $p_{\rm L}(\mathcal{E})=p_{\rm PPT}(\mathcal{E})$. Thus, Eq.~\eqref{eq:pleqqg} holds.\\
\indent Conversely, if Eq.~\eqref{eq:pleqqg} is satisfied, then
\begin{equation}
\begin{array}{rcl}
q_{\rm G}(\mathcal{E})=p_{\rm L}(\mathcal{E})&=&\sum_{i=1}^{n}\eta_{i}\mathrm{Tr}(\rho_{i}\tilde{M}_{i})\\[2mm]
&=&\sum_{i=1}^{n}\eta_{i}\mathrm{Tr}(\rho_{i}^{\rm PT}\tilde{M}_{i}^{\rm PT}),
\end{array}
\end{equation}
where $\{\tilde{M}_{i}\}_{i=1}^{n}$ is a LOCC measurement realizing $p_{\rm L}(\mathcal{E})$ in Eq.~\eqref{eq:pldef}. 
Since every LOCC measurement is PPT,
it follows that $\{\tilde{M}_{i}^{\rm PT}\}_{i=1}^{n}$ is a POVM.
Thus, $\{M_{i}\}_{i=1}^{n}$ with $M_{i}=\tilde{M}_{i}^{\rm PT}$ for all $i$ is a POVM giving $q_{\rm G}(\mathcal{E})$ such that
$\{M_{i}^{\rm PT}\}_{i=1}^{n}$ is a LOCC measurement.
\end{proof}

\indent For a given state ensemble $\mathcal{E}=\{\eta_{i},\rho_{i}\}_{i=1}^{n}$, the following theorem provides a necessary and sufficient condition on a measurement $\{M_{i}\}_{i=1}^{n}$ to realize $q_{\rm G}(\mathcal{E})$ in Eq.~\eqref{eq:qgdef}.
\begin{theorem}\label{thm:mnsc}
For a bipartite quantum state ensemble $\mathcal{E}=\{\eta_{i},\rho_{i}\}_{i=1}^{n}$, a POVM $\{M_{i}\}_{i=1}^{n}$ gives $q_{\rm G}(\mathcal{E})$ if and only if it satisfies
\begin{equation}\label{eq:nscqg}
\sum_{j=1}^{n}\eta_{j}\rho_{j}^{\rm PT}M_{j}-\eta_{i}\rho_{i}^{\rm PT}\succeq0~\forall i=1,\ldots,n.
\end{equation}
Moreover, if a POVM $\{M_{i}\}_{i=1}^{n}$ realizes $q_{\rm G}(\mathcal{E})$, then
\begin{equation}\label{eq:oncqg}
M_{i}(\eta_{i}\rho_{i}^{\rm PT}-\eta_{j}\rho_{j}^{\rm PT})M_{j}=0~\forall i,j=1,\ldots,n.
\end{equation}
\end{theorem}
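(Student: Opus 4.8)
The plan is to reduce the maximization defining $q_{\rm G}(\mathcal{E})$ to an ordinary minimum-error discrimination problem for a genuine ensemble and then invoke the known optimality conditions \eqref{eq:nscfme}--\eqref{eq:oncfme}. The obstruction to applying those conditions directly is that the operators $\eta_{i}\rho_{i}^{\rm PT}$, although Hermitian with trace $\eta_{i}$, need not be positive semidefinite, so $\{\eta_{i},\rho_{i}^{\rm PT}\}_{i=1}^{n}$ is not a state ensemble. I would remove this obstruction by a global shift: choose a constant $c>0$ large enough that $\eta_{i}\rho_{i}^{\rm PT}+c\mathbbm{1}\succeq0$ for every $i$ (possible since there are finitely many bounded operators), set $d=d_{1}d_{2}=\mathrm{Tr}\,\mathbbm{1}$, $\tilde{\eta}_{i}=\eta_{i}+cd>0$, and $\tilde{\rho}_{i}=(\eta_{i}\rho_{i}^{\rm PT}+c\mathbbm{1})/\tilde{\eta}_{i}$. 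Then each $\tilde{\rho}_{i}$ is a density operator, and after normalizing the weights by $N=\sum_{i=1}^{n}\tilde{\eta}_{i}=1+ncd$ one obtains a legitimate ensemble $\tilde{\mathcal{E}}=\{\tilde{\eta}_{i}/N,\tilde{\rho}_{i}\}_{i=1}^{n}$ to which \eqref{eq:nscfme}--\eqref{eq:oncfme} apply verbatim.

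Next I would show the two problems have the same maximizers. For an arbitrary POVM $\{M_{i}\}_{i=1}^{n}$, using $\sum_{i}M_{i}=\mathbbm{1}$ and $\mathrm{Tr}\,\mathbbm{1}=d$,
\begin{equation}
\sum_{i=1}^{n}\tilde{\eta}_{i}\mathrm{Tr}(\tilde{\rho}_{i}M_{i})=\sum_{i=1}^{n}\eta_{i}\mathrm{Tr}(\rho_{i}^{\rm PT}M_{i})+cd,
\end{equation}
so the objective function of $\tilde{\mathcal{E}}$ equals $N^{-1}\bigl(\sum_{i}\eta_{i}\mathrm{Tr}(\rho_{i}^{\rm PT}M_{i})+cd\bigr)$. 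Since the additive term does not depend on the POVM, a POVM realizes $q_{\rm G}(\mathcal{E})$ in \eqref{eq:qgdef} if and only if it realizes $p_{\rm G}(\tilde{\mathcal{E}})$ in \eqref{eq:pgdef}; in particular $q_{\rm G}(\mathcal{E})=Np_{\rm G}(\tilde{\mathcal{E}})-cd$.

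Finally I would translate the conditions. Applying \eqref{eq:nscfme} to $\tilde{\mathcal{E}}$ and multiplying through by $N>0$ gives $\sum_{j}\tilde{\eta}_{j}\tilde{\rho}_{j}M_{j}-\tilde{\eta}_{i}\tilde{\rho}_{i}\succeq0$ for all $i$; substituting $\tilde{\eta}_{j}\tilde{\rho}_{j}=\eta_{j}\rho_{j}^{\rm PT}+c\mathbbm{1}$ and cancelling the $c\mathbbm{1}$ contributions with the help of $\sum_{j}M_{j}=\mathbbm{1}$ yields exactly \eqref{eq:nscqg}. In the same way \eqref{eq:oncfme} for $\tilde{\mathcal{E}}$ becomes \eqref{eq:oncqg}, the $c\mathbbm{1}$ terms dropping out of $M_{i}(\cdots)M_{j}$; and since \eqref{eq:oncfme} is only necessary, so is \eqref{eq:oncqg}. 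Combining these steps with the equivalence ``realizes $q_{\rm G}(\mathcal{E})$ iff realizes $p_{\rm G}(\tilde{\mathcal{E}})$'' proves the theorem.

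The only point requiring care is the legitimacy of the reduction: one must check that the shift $cd$ is genuinely POVM-independent (it is, precisely because of $\sum_{i}M_{i}=\mathbbm{1}$), that $\tilde{\rho}_{i}$ is a true density operator even though $\rho_{i}^{\rm PT}$ need not be positive, and that the $c\mathbbm{1}$ terms cancel in both \eqref{eq:nscqg} and \eqref{eq:oncqg}. A more self-contained alternative avoids the reduction altogether: sufficiency of \eqref{eq:nscqg} is immediate on setting $K=\sum_{j}\eta_{j}\rho_{j}^{\rm PT}M_{j}$, noting $\mathrm{Tr}\bigl[(K-\eta_{i}\rho_{i}^{\rm PT})N_{i}\bigr]\geqslant0$ for every POVM $\{N_{i}\}$, and summing over $i$ to conclude that the value of $\{M_{i}\}$ upper bounds that of $\{N_{i}\}$ --- this uses only Hermiticity, not positivity --- while the necessity of \eqref{eq:nscqg} and the relation \eqref{eq:oncqg} follow from a first-order variational (semidefinite-programming duality) argument. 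I would present the shift argument, as it is the shortest route and directly reuses the established conditions.
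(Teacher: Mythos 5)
Your proof is correct, but it takes a genuinely different route from the paper. You reduce the maximization of $\sum_{i}\eta_{i}\mathrm{Tr}(\rho_{i}^{\rm PT}M_{i})$ to a bona fide minimum-error discrimination problem by the shift $\tilde{\eta}_{i}\tilde{\rho}_{i}=\eta_{i}\rho_{i}^{\rm PT}+c\mathbbm{1}$, and then import the known optimality conditions \eqref{eq:nscfme}--\eqref{eq:oncfme} wholesale; the key checks you flag (the shift $cd$ is POVM-independent because $\sum_{i}M_{i}=\mathbbm{1}$, each $\tilde{\rho}_{i}$ is a genuine density operator since partial transposition preserves trace, and the $c\mathbbm{1}$ terms cancel in both translated conditions) are exactly the points that need verifying, and they all go through. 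The paper instead gives a self-contained argument: sufficiency by the same direct positivity computation you sketch as your ``alternative,'' and necessity by a first-order perturbation of the optimal POVM (the operators $H_{i}$ in Eq.~\eqref{eq:hidef}, shown positive semidefinite by constructing the perturbed POVM $\{M_{i}^{(\epsilon)}\}$, then $H_{i}M_{i}=0$ via the traceless identity \eqref{eq:shimi}), which simultaneously yields \eqref{eq:nscqg} and \eqref{eq:oncqg}. Your reduction is shorter and makes transparent that Theorem~\ref{thm:mnsc} is literally the Holevo--Yuen--Kennedy--Lax criterion applied to a shifted ensemble, at the cost of leaning on the cited conditions as a black box (including the nontrivial necessity direction of \eqref{eq:nscfme}); the paper's variational proof buys self-containedness, effectively re-deriving those conditions in the partial-transpose setting. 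Both are valid proofs of the statement.
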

\begin{proof}
\indent To prove the sufficiency of the first statement, we suppose that $\{M_{i}\}_{i=1}^{n}$ is a POVM satisfying Condition~\eqref{eq:nscqg}. For any POVM $\{M_{i}'\}_{i=1}^{n}$, we have
\begin{eqnarray}
\begin{array}{rcl}
&&\sum_{j=1}^{n}\eta_{j}\mathrm{Tr}(\rho_{j}^{\rm PT}M_{j})-\sum_{k=1}^{n}\eta_{k}\mathrm{Tr}(\rho_{k}^{\rm PT}M_{k}')\\[3mm]
&=&
\mathrm{Tr}\big[\sum_{j=1}^{n}\eta_{j}\rho_{j}^{\rm PT}M_{j}\big(\sum_{i=1}^{n}M_{i}'\big)\big]\\[2mm]
&&-\sum_{k=1}^{n}\mathrm{Tr}(\eta_{k}\rho_{k}^{\rm PT}M_{k}')\\[3mm]
&=&\sum_{i=1}^{n}\mathrm{Tr}\big[
\big(\sum_{j=1}^{n}\eta_{j}\rho_{j}^{\rm PT}M_{j}-\eta_{i}\rho_{i}^{\rm PT}\big)M_{i}'
\big]\geqslant0,
\end{array}
\end{eqnarray}
where the first equality follows from $\sum_{i=1}^{n}M_{i}'=\mathbbm{1}$ and the inequality is from $M_{i}'\succeq0$ for all $i$ and Condition~\eqref{eq:nscqg}.
Thus, the definition of $q_{\rm G}(\mathcal{E})$ leads us to 
\begin{equation}
\sum_{i=1}^{n}\eta_{i}\mathrm{Tr}(\rho_{i}^{\rm PT}M_{i})=q_{\rm G}(\mathcal{E}), 
\end{equation}
which proves the sufficiency of the first statement.\\
\indent To prove the second statement along with the necessity of the first statement, we assume that $\{M_{i}\}_{i=1}^{n}$ is a POVM providing $q_{\rm G}(\mathcal{E})$. 
We first show the positive semidefiniteness of the following Hermitian operators
\begin{equation}\label{eq:hidef}
\begin{array}{rcl}
H_{i}&=&\frac{1}{2}\sum_{j=1}^{n}(\eta_{j}\rho_{j}^{\rm PT}M_{j}+\eta_{j}M_{j}\rho_{j}^{\rm PT})\\[2mm]
&&-\eta_{i}\rho_{i}^{\rm PT},~i=1,\ldots,n.
\end{array}
\end{equation}
To show it, we first suppose $\langle v|H_{1}|v\rangle<0$ for some unit vector $|v\rangle$ and lead to a contradiction.
For $0<\epsilon<1$, let us consider the following POVM $\{M_{i}^{(\epsilon)}\}_{i=1}^{n}$,
\begin{equation}
\begin{array}{rcl}
M_{i}^{(\epsilon)}&=&(\mathbbm{1}-\epsilon|v\rangle\!\langle v|)M_{i}(\mathbbm{1}-\epsilon|v\rangle\!\langle v|)\\[2mm]
&&+\epsilon(2-\epsilon)|v\rangle\!\langle v|\delta_{i1},~i=1,\ldots,n,
\end{array}
\end{equation}
where $\delta_{ij}$ is the Kronecker delta.\\
\indent From a straightforward calculation, we can easily see that
\begin{eqnarray}\label{eq:scws}
&&\sum_{j=1}^{n}\eta_{j}\mathrm{Tr}(\rho_{j}^{\rm PT}M_{j}^{(\epsilon)})
-\sum_{i=1}^{n}\eta_{i}\mathrm{Tr}(\rho_{i}^{\rm PT}M_{i})
=-2\epsilon\langle v|H_{1}|v\rangle\nonumber\\
&&+\epsilon^{2}\big[\sum_{i=1}^{n}\eta_{i}\langle v|\rho_{i}^{\rm PT}|v\rangle\!\langle v|M_{i}|v\rangle
-\eta_{1}\langle v|\rho_{1}^{\rm PT}|v\rangle\big].
\end{eqnarray}
For given real numbers $a$ and $b$ with $a>0$, we note that there is $\epsilon\in(0,1)$ such that $a\epsilon+b\epsilon^{2}>0$. Thus, the right-hand side of Eq.~\eqref{eq:scws} is positive for some $\epsilon\in(0,1)$. This contradicts the assumption that $\{M_{i}\}_{i=1}^{n}$ realizes $q_{\rm G}(\mathcal{E})$, therefore $H_{1}\succeq0$. Since the choice of $H_{1}$ with a negative eigenvalue can be arbitrary, $H_{i}\succeq0$ for all $i$.\\
\indent Now, we show the satisfaction of Conditions~\eqref{eq:nscqg} and \eqref{eq:oncqg}. It is straightforward to verify that
\begin{equation}\label{eq:shimi}
2\sum_{i=1}^{n}H_{i}M_{i}=\sum_{i=1}^{n}\eta_{i}M_{i}\rho_{i}^{\rm PT}
-\sum_{j=1}^{n}\eta_{j}\rho_{j}^{\rm PT}M_{j}.
\end{equation}
Since the right-hand side of Eq.~\eqref{eq:shimi} is traceless, we have
\begin{equation}\label{eq:trless}
\sum_{i=1}^{n}\mathrm{Tr}(H_{i}M_{i})=0,
\end{equation}
which implies
\begin{equation}\label{eq:hmmh}
H_{i}M_{i}=M_{i}H_{i}=0~\forall i=1,\ldots,n
\end{equation}
due to the positive semidefiniteness of $H_{i}$ and $M_{i}$ for all $i$.
Equations~\eqref{eq:shimi} and \eqref{eq:hmmh} lead us to
\begin{equation}\label{eq:emrerm}
\sum_{i=1}^{n}\eta_{i}M_{i}\rho_{i}^{\rm PT}
=\sum_{i=1}^{n}\eta_{i}\rho_{i}^{\rm PT}M_{i}.
\end{equation}
Applying Eq.~\eqref{eq:emrerm} to Eq.~\eqref{eq:hidef}, we can show that
\begin{equation}
\sum_{j=1}^{n}\eta_{j}\rho_{j}^{\rm PT}M_{j}-\eta_{i}\rho_{i}^{\rm PT}=H_{i}\succeq0~\forall i=1,\ldots,n,
\end{equation}
therefore Condition~\eqref{eq:nscqg} holds.
That is, the necessity of the first statement is true.\\
\indent Moreover, Eq.~\eqref{eq:hmmh} leads us to
\begin{equation}
\begin{array}{rcl}
&&M_{i}(\eta_{i}\rho_{i}^{\rm PT}-\eta_{j}\rho_{j}^{\rm PT})M_{j}\\[2mm]
&=&M_{i}H_{j}M_{j}-M_{i}H_{i}M_{j}=0~\forall i,j=1,\ldots,n,
\end{array}
\end{equation}
this is, Condition~\eqref{eq:oncqg} is satisfied.
Therefore, the second statement is also true.
\end{proof}

\indent From Corollary~\ref{cor:plqg} and Theorem~\ref{thm:mnsc}, we have the following corollary.

\begin{corollary}\label{cor:plqgnc}
For a bipartite quantum state ensemble $\mathcal{E}=\{\eta_{i},\rho_{i}\}_{i=1}^{n}$, 
\begin{equation}
p_{\rm L}(\mathcal{E})=q_{\rm G}(\mathcal{E})
\end{equation}
if and only if there is a POVM $\{M_{i}\}_{i=1}^{n}$ satisfying Condition~\eqref{eq:nscqg} such that
$\{M_{i}^{\rm PT}\}_{i=1}^{n}$ is a LOCC measurement.
\end{corollary}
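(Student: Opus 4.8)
\indent The plan is to obtain the statement as an immediate consequence of Corollary~\ref{cor:plqg} and Theorem~\ref{thm:mnsc}, by substituting the operator characterization of $q_{\rm G}(\mathcal{E})$-realizing POVMs into the LOCC-saturation criterion. The point is that the two earlier results together already identify exactly which POVMs matter.

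\indent First, I would invoke Corollary~\ref{cor:plqg}, which states that $p_{\rm L}(\mathcal{E})=q_{\rm G}(\mathcal{E})$ holds if and only if there exists a POVM $\{M_{i}\}_{i=1}^{n}$ giving $q_{\rm G}(\mathcal{E})$ such that $\{M_{i}^{\rm PT}\}_{i=1}^{n}$ is a LOCC measurement. Second, I would apply the first statement of Theorem~\ref{thm:mnsc}, which establishes that a POVM $\{M_{i}\}_{i=1}^{n}$ gives $q_{\rm G}(\mathcal{E})$ if and only if it satisfies Condition~\eqref{eq:nscqg}. Replacing the clause ``giving $q_{\rm G}(\mathcal{E})$'' by the equivalent clause ``satisfying Condition~\eqref{eq:nscqg}'' in the criterion furnished by Corollary~\ref{cor:plqg} yields precisely the asserted equivalence. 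Both implications follow at once, because the equivalence in Theorem~\ref{thm:mnsc} is two-sided.

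\indent I do not anticipate any real obstacle here: the corollary is a purely logical combination of two already-proved statements, and no new estimate or construction is needed. The only thing to be careful about is that one uses only the first (``if and only if'') statement of Theorem~\ref{thm:mnsc}; the second, one-directional statement~\eqref{eq:oncqg} plays no role in this particular corollary, though it could be mentioned as additional structural information satisfied by any such $\{M_{i}\}_{i=1}^{n}$.
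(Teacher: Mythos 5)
Your proposal is correct and matches the paper exactly: the paper also obtains this corollary by substituting the equivalence of Theorem~\ref{thm:mnsc} (POVM realizes $q_{\rm G}(\mathcal{E})$ iff Condition~\eqref{eq:nscqg} holds) into the criterion of Corollary~\ref{cor:plqg}, with no further argument needed.
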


\indent For any integer $d\geqslant 2$, let us consider the two-qu$d$it state ensemble $\mathcal{E}=\{\eta_{i,j}^{(k)},\rho_{i,j}^{(k)}\}_{i,j,k}$ consisting of $2d(d-1)$ states with equal prior probability, 
\begin{eqnarray}\label{eq:exerho}
\begin{array}{ll}
\eta_{i,j}^{(k)}=\frac{1}{2d(d-1)},~\rho_{i,j}^{(k)}=\lambda|\Psi_{i,j}^{(k)}\rangle\!\langle\Psi_{i,j}^{(k)}|+(1-\lambda)\sigma,\\[2mm]
i,j\in\{0,1,\ldots,d-1\}~\mbox{with}~i<j,~k=1,2,3,4,
\end{array}
\end{eqnarray}
where $0<\lambda\leqslant1$,
$\sigma$ is an arbitrary two-qudit state, and 
\begin{equation}
\begin{array}{ll}
|\Psi_{i,j}^{(1)}\rangle=\frac{1}{\sqrt{2}}(|i\rangle\otimes|i\rangle+|j\rangle\otimes|j\rangle),\\
|\Psi_{i,j}^{(2)}\rangle=\frac{1}{\sqrt{2}}(|i\rangle\otimes|i\rangle-|j\rangle\otimes|j\rangle),\\
|\Psi_{i,j}^{(3)}\rangle=\frac{1}{\sqrt{2}}(|i\rangle\otimes|j\rangle+|j\rangle\otimes|i\rangle),\\
|\Psi_{i,j}^{(4)}\rangle=\frac{1}{\sqrt{2}}(|i\rangle\otimes|j\rangle-|j\rangle\otimes|i\rangle).
\end{array}
\end{equation}
For a POVM $\{M_{i,j}^{(k)}\}_{i,j,k}$ with
\begin{equation}
\begin{array}{l}
M_{i,j}^{(1)}=\frac{1}{d-1}|\Psi_{i,j}^{(1)}\rangle\!\langle\Psi_{i,j}^{(1)}|,~
M_{i,j}^{(3)}=|\Psi_{i,j}^{(3)}\rangle\!\langle\Psi_{i,j}^{(3)}|,\\[2mm]
M_{i,j}^{(2)}=\frac{1}{d-1}|\Psi_{i,j}^{(2)}\rangle\!\langle\Psi_{i,j}^{(2)}|,~
M_{i,j}^{(4)}=|\Psi_{i,j}^{(4)}\rangle\!\langle\Psi_{i,j}^{(4)}|,\\[2mm]
\end{array}
\end{equation}
Condition \eqref{eq:nscfme} holds, that is,
\begin{equation}
\begin{array}{rcl}
&&\sum_{i',j',k'}\eta_{i',j'}^{(k')}\rho_{i',j'}^{(k')}M_{i',j'}^{(k')}-\eta_{i,j}^{(k)}\rho_{i,j}^{(k)}\\[1mm]
&=&\frac{\lambda}{2d(d-1)}\big(\mathbbm{1}-|\Psi_{i,j}^{(k)}\rangle\!\langle\Psi_{i,j}^{(k)}|\big)\succeq0,~\forall i,j,k.
\end{array}
\end{equation}
Therefore, the optimal success probability $p_{\rm G}(\mathcal{E})$ in Eq.~\eqref{eq:pgdef} is
\begin{equation}\label{eq:expg}
p_{\rm G}(\mathcal{E})=
\sum_{i,j,k}\eta_{i,j}^{(k)}\mathrm{Tr}(\rho_{i,j}^{(k)}M_{i,j}^{(k)})
=\frac{1+\lambda (d^{2}-1)}{2d(d-1)}.
\end{equation}
\indent To obtain $q_{\rm G}(\mathcal{E})$ in Eq.~\eqref{eq:qgdef}, we use a POVM $\{M_{i,j}^{(k)}\}_{i,j,k}$ that consists of
\begin{equation}\label{eq:exm}
\begin{array}{ll}
M_{i,j}^{(1)}=\frac{1}{d-1}|i\rangle\!\langle i|\otimes|i\rangle\!\langle i|,&
M_{i,j}^{(3)}=|i\rangle\!\langle i|\otimes|j\rangle\!\langle j|,\\[2mm]
M_{i,j}^{(2)}=\frac{1}{d-1}|j\rangle\!\langle j|\otimes|j\rangle\!\langle j|,&
M_{i,j}^{(4)}=|j\rangle\!\langle j|\otimes|i\rangle\!\langle i|.
\end{array}
\end{equation}
This POVM satisfies Condition~\eqref{eq:nscqg} because
\begin{equation}
\begin{array}{rcl}
&&\sum_{i',j',k'}\eta_{i',j'}^{(k')}\rho_{i',j'}^{(k')\,\rm PT}M_{i',j'}^{(k')}-\eta_{i,j}^{(k)}\rho_{i,j}^{(k)\,\rm PT}\\[2mm]
&=&\frac{\lambda}{4d(d-1)}\big(\mathbbm{1}-\mathbbm{1}_{i,j}+2|\Psi_{i,j}^{(5-k)}\rangle\!\langle\Psi_{i,j}^{(5-k)}|\big)\\[2mm]
&\succeq&0~\forall i,j,k,
\end{array}
\end{equation}
where 
\begin{equation}
\mathbbm{1}_{i,j}=(|i\rangle\!\langle i|+|j\rangle\!\langle j|)\otimes(|i\rangle\!\langle i|+|j\rangle\!\langle j|)
\end{equation}
and the equality is due to 
\begin{eqnarray}
\begin{array}{l}
|\Psi_{i,j}^{(k)}\rangle\!\langle\Psi_{i,j}^{(k)}|^{\rm PT}=\frac{1}{2}\mathbbm{1}_{i,j}
-|\Psi_{i,j}^{(5-k)}\rangle\!\langle\Psi_{i,j}^{(5-k)}|,\\[2mm]
|\Psi_{i,j}^{(5-k)}\rangle\!\langle\Psi_{i,j}^{(5-k)}|M_{i,j}^{(k)}=0
~\forall i,j,k.
\end{array}
\end{eqnarray}
Thus, Theorems~\ref{thm:mnsc} leads us to
\begin{equation}\label{eq:exqg}
q_{\rm G}(\mathcal{E})=
\sum_{i,j,k}\eta_{i,j}^{(k)}\mathrm{Tr}(\rho_{i,j}^{(k)}M_{i,j}^{(k)})
=\frac{2+\lambda (d^{2}-2)}{4d(d-1)}.
\end{equation}
\indent Moreover, the POVM $\{M_{i,j}^{(k)}\}_{i,j,k}$ in Eq.~\eqref{eq:exm} is a LOCC measurement since it can be implemented by performing the same local measurement $\{|l\rangle\!\langle l|\}_{l=0}^{d-1}$ on two subsystems.
Thus, Corollary~\ref{cor:plqgnc} and Eq.~\eqref{eq:exqg} lead us to
\begin{equation}\label{eq:expqg}
\begin{array}{c}
p_{\rm L}(\mathcal{E})=q_{\rm G}(\mathcal{E})=\frac{2+\lambda (d^{2}-2)}{4d(d-1)}
=p_{\rm G}(\mathcal{E})-\frac{\lambda d}{4(d-1)}.
\end{array}
\end{equation}
In the case of $d=2$, Eqs.~\eqref{eq:expg} and \eqref{eq:expqg}
coincide with the existing results in Ref.~\cite{band2021}.\\
\indent In this paper, we have considered the situation of discriminating bipartite quantum states, and provided an upper bound $q_{\rm G}(\mathcal{E})$ for the maximum success probability of optimal local discrimination $p_{\rm L}(\mathcal{E})$(Lemma~\ref{lem:pptqg}).
We have further established a necessary and sufficient condition for a measurement to realize $q_{\rm G}(\mathcal{E})$(Theorem~\ref{thm:mnsc}).
Moreover, we have provided the equality condition between $q_{\rm G}(\mathcal{E})$ and $p_{\rm L}(\mathcal{E})$  (Corollaries~\ref{cor:plqg} and \ref{cor:plqgnc}).
Finally, we have illustrated the effectiveness of our results through an example.\\
\indent We note that finding $p_{\rm G}(\mathcal{E})$ or $q_{\rm G}(\mathcal{E})$ in discriminating separable quantum states can be useful in studying the nonlocal phenomenon of separable quantum states, namely \emph{nonlocality without entanglement}(NLWE) \cite{pere1991,benn19991}.
For the minimum-error discrimination of a separable state ensemble $\{\eta_{i},\rho_{i}\}_{i=1}^{n}$,
NLWE occurs if the guessing probability $p_{\rm G}(\mathcal{E})$ cannot be achieved only by LOCC,
that is, $p_{\rm L}(\mathcal{E})<p_{\rm G}(\mathcal{E})$.
From Lemma~\ref{lem:pptqg}, $q_{\rm G}(\mathcal{E})<p_{\rm G}(\mathcal{E})$ implies $p_{\rm L}(\mathcal{E})<p_{\rm G}(\mathcal{E})$, therefore the occurrence of NLWE.
Moreover, even if $q_{\rm G}(\mathcal{E})>p_{\rm G}(\mathcal{E})$, we can show the NLWE phenomenon in terms of $\{\eta_{i},\rho_{i}^{\rm PT}\}_{i=1}^{n}$ because 
the partial transposition of any separable state is another separable state and the roles of $p_{\rm G}(\mathcal{E})$ and $q_{\rm G}(\mathcal{E})$ are interchanged for the minimum-error discrimination of $\{\eta_{i},\rho_{i}^{\rm PT}\}_{i=1}^{n}$.\\
\indent It is an interesting future work to investigate 
good conditions of optimal local discrimination in multiparty quantum systems having more than two parties.
It is also natural to ask if our results are still valid  
for other optimal discrimination strategies other than minimum-error discrimination.

This work was supported by Basic Science Research Program(NRF-2020R1F1A1A010501270) and Quantum Computing Technology Development Program(NRF-2020M3E4A1080088) through the National Research Foundation of Korea(NRF) grant funded by the Korea government(Ministry of Science and ICT).


\end{document}